
\documentclass[letterpaper, 10 pt, conference]{ieeeconf}  

\pdfminorversion=4
\IEEEoverridecommandlockouts                            
\overrideIEEEmargins                                      




\title{\LARGE \bf
Constructing Feedback Linearizable Discretizations for Continuous-Time Systems using Retraction Maps.
}

\author{Ashutosh Jindal$^{1}$, Ravi Banavar$^{2}$ and David Mart{\'\i}n Diego$^{3}$
\thanks{*D. Mart{\'\i}n de Diego acknowledges financial support from the Spanish Ministry of Science and Innovation, under grants PID2019-106715GB-C21 and CEX2019-000904-S.}
\thanks{$^{1,2}$
        Systems and Control Engineering, Indian Institute of Technology Bombay, Mumbai, 400076, Maharashtra, India  $^1${\tt\small jindal.ashutosh21@gmail.com}, $^2${\tt\small banavar@iitb.ac.in} }
\thanks{$^{3}$ Instituto de Ciencias Matem\'aticas (CSIC-UAM-UC3M-UCM), Calle Nicol\'as Cabrera 13-15, 28049 Madrid, Spain
{\tt\small david.martin@icmat.es}}
}
\usepackage{mymacros}

\begin{document}

\maketitle
\thispagestyle{empty}
\pagestyle{empty}

\begin{abstract}
Control laws for continuous-time dynamical systems are most often implemented via digital controllers using a
sample-and-hold technique. Numerical discretization of the continuous system is an integral part of subsequent 
analysis. Feedback linearizability of such sampled systems is dependent upon the choice of discretization map
or technique. In this article, for feedback linearizable continuous-time systems, we utilize the idea of retraction maps to construct discretizations that are feedback linearizable as well. We also propose a method to functionally compose discretizations to obtain higher-order integrators that are feedback linearizable.  

\end{abstract}

\section{Introduction}
Digital controllers facilitate the implementation of continuous-time control systems via discretization. For non-autonomous systems i.e., for systems with inputs this is done via (a) sample and hold technique where the control input is held constant between two sampling intervals and (b) a discretization scheme that solves the evolution of the continuous-time dynamical systems numerically. Different numerical schemes result in different discretizations of the continuous time systems. On Euclidean spaces i.e., for systems evolving on $\R[n]$, some of the common numerical integration schemes are Euler Integrations methods, Runge-kutta-based methods, Simpson 1/3 rule, etc. \cite{blanes2017concise}. While these schemes perform well for systems evolving in euclidean spaces, when implemented for systems evolving on general manifolds, they do not guarantee that the system states stay on the manifold. In order to maintain the non-euclidean structure of the underlying manifold one would like to construct integrators that respect the underlying geometry of the continuous-time dynamical system. Such integrators are called geometric integrators and these result in more accurate long-term behavior. A summary of geometric integrator schemes is given in \cite{hairer, blanes2017concise}. Retraction maps  are a generalization of euclidean discretizations on non-euclidean manifolds (see \cite{21MBLDMdD, AbMaSeBookRetraction}). Retraction maps allow us to construct geometric discretizations that guarantee the system states stay on the manifold. 

Feedback linearization allows us to transform a nonlinear control system into a linear system via a coordinate transformation and invertible control feedback. This allows us to utilize the control design methods such as pole placement etc  available for linear systems to synthesize controls for the nonlinear system. A study of feedback linearization for continuous time systems is provided in \cite{brockett1978feedback,jacubczyk1980linearization,isidori1982feedback} and references therein. A discrete-time equivalent of feedback linearization is studied in \cite{grizzle1986feedback,lee1987linearization,jayaraman1993feedback,aranda1996linearization}. Sampling, in general, does not preserve the feedback linearization, i.e., given a feedback linearizable continuous time system, under sample and hold method and a particular choice of discretization the resulting discrete-time system need not be feedback linearizable (in discrete time) in general. In \cite{grizzle1988feedback}, with help of an example shows that a continuous system, under one discretization, may result in a feedback linearizable discrete-time system while it may not do so under some other discretization. Under exact discretization methods, the feedback linearizability of the sampled time systems has been studied in \cite{jakubczyk1987effect,arapostathis1989effect}. Since feedback linearization allows us to utilize the advantages of  linear control theory, it is of interest to find (numerical) discretizations that are feedback linearizable. 

\noindent \textbf{Contribution}: In this article, given a feedback linearization we utilize retraction maps to construct discretizations  that are feedback linearizable. We also provide a way to compose these discretizations to generate symmetric discretizations that are accurate upto second order while maintaining feedback linearizability. However, this requires multi-rate sampling. 

\noindent\textbf{Organization}: The article is organized as follows: in Section \ref{ret_map} and \ref{cts_fed-lin}, we provide a brief introduction on retraction maps and continuous time feedback linearization respectively. The retraction maps are defined for autonomous systems, we extend this notion to nonautonomous systems. In section \ref{num_desc_cts} we provide our main result where we utilize the retractions maps to construct first-order discretizations  that are feedback linearizable and in Section \ref{high_desc} we present a way to compose first-order discretizations to construct higher-order feedback linearizable schemes. This is obtained via multi-rate sampling schemes. We demonstrate these results on a simple example in Section \ref{exmpl}.  
\section{Retraction and  Discretization Maps}
\label{ret_map}
Let $M$ be an $n$ dimensional manifold and $TM$ be the associated tangent bundle. Let $TM\ni(x,v)\mapsto \tau_M(x,v)\coloneqq x$ be the canonical projection onto the manifold. 
\begin{defn}[Discretization Map \emph{(see \cite{21MBLDMdD,AbMaSeBookRetraction})}]
Let $U\subset TM$ be an open neighborhood of the zero section of the tangent bundle $TM$. $U\ni(x,v)\mapsto R(x,y)\in M\times M$ is a discretization map if it satisfies
\begin{enumerate}
    \item $(x,0_x)\mapsto R(x,0_x) = (x,x)$
    \item $T_{(x,0_x)}R^2-T_{(x,0_x)}R^1={\rm Id}_{T_xM} \colon T_{(x,0_x)}T_xM\simeq T_xM \lra T_xM$ is equal to the identity map on $T_xM$ for any $x$ in $M$.
\end{enumerate}
\end{defn}
As a consequence is easy to show that  any discretization map $R$ is a local diffeomorphism. 
\begin{defn}[Adjoint Discretization]
\label{adj_disc}
 Let $R$ be a discretization on $M$. Consider the inversion map $(x,y)\ni M\times M \mapsto \I_M(x,y)\coloneqq (y,x)\in M\times M$. The adjoint of $R$ is defined by $U\ni(x,v)\mapsto R^*(x,v) \coloneqq \I_M(R(x,-v))$. 
\end{defn}
A discretization is called \emph{symmetric} if $R= R^*$. 

Given $X\in\mathfrak{X}$ a vector field on $M$ and a discretization map $R$ we have the following discretization map and a fixed time discretization map $t\mapsto(t-\alpha h, t+(1-\alpha)h)$, $\alpha\in[0,1]$. 
\begin{prop}
\label{ret_desc}
The discretization of $X$ defined by
\[
R^{-1}(x_k, x_{k+1})=hX\underbrace{( \tau_M( R^{-1}(x_k, x_{k+1})) )}_{\in M}
\]
is  a first-order discretization of $X$ and second-order if $R$ is symmetric.
\end{prop}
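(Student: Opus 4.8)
The plan is to exploit the implicit definition of the one-step map and reduce everything to Taylor expansion in the step size, with the symmetric case handled by a structural (rather than computational) argument. Write $R=(R^1,R^2)$ for the two components $R^i\colon U\to M$. Given $x_k$, the scheme selects a base point $\xi=\tau_M(R^{-1}(x_k,x_{k+1}))$ and the update $x_{k+1}$ through the two scalar-valued (in coordinates) relations $R^1(\xi,hX(\xi))=x_k$ and $x_{k+1}=R^2(\xi,hX(\xi))$. Since $R$ is a local diffeomorphism (as remarked after the definition) and Property~1 gives $R^1(\xi,0)=\xi$, the implicit function theorem guarantees, for $h$ small, a unique $\xi$ depending smoothly on $(x_k,h)$ with $\xi=x_k+O(h)$; this is what makes the one-step map $\Phi_h\colon x_k\mapsto x_{k+1}$ well-defined, and establishing this well-definedness uniformly in small $h$ is a preliminary I would dispatch first.

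Next I would prove first-order consistency by a single Taylor expansion. Expanding each component in the fibre variable, $R^i(\xi,v)=\xi+A_i(\xi)\,v+O(|v|^2)$ with $A_i(\xi)=\partial R^i/\partial v\,(\xi,0)$, Property~2 reads exactly $A_2-A_1=\mathrm{Id}$. Substituting $v=hX(\xi)$ and solving $R^1(\xi,hX(\xi))=x_k$ yields $\xi=x_k-A_1(x_k)\,hX(x_k)+O(h^2)$; feeding this into $R^2$ gives $x_{k+1}=x_k+(A_2-A_1)(x_k)\,hX(x_k)+O(h^2)=x_k+hX(x_k)+O(h^2)$, which matches the exact flow $\phi_h(x_k)=x_k+hX(x_k)+O(h^2)$. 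Property~1 is what pins the leading term ($\xi=x_k$, $x_{k+1}=x_k$ at $h=0$), and Property~2 supplies the correct first-order coefficient, so the scheme is a first-order discretization of $X$.

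For the second-order claim under symmetry I would avoid pushing the expansion to $O(h^2)$ and instead show that symmetry of $R$ forces $\Phi_h$ to be a \emph{symmetric integrator}, $\Phi_{-h}\circ\Phi_h=\mathrm{Id}$. Unwinding $R=R^*$ through Definition~\ref{adj_disc} gives the component relations $R^1(x,v)=R^2(x,-v)$ and $R^2(x,v)=R^1(x,-v)$. If $x_{k+1}=\Phi_h(x_k)$ with base point $\xi$, so that $R^1(\xi,hX(\xi))=x_k$ and $R^2(\xi,hX(\xi))=x_{k+1}$, then applying these relations to the argument $-hX(\xi)$ yields $R(\xi,-hX(\xi))=(x_{k+1},x_k)$. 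Since the fibre vector there is precisely $-hX(\xi)=(-h)X(\tau_M(\cdot))$, the pair $(x_{k+1},x_k)$ satisfies the defining equation of the scheme at step size $-h$ with the same base point $\xi$, hence $\Phi_{-h}(x_{k+1})=x_k$, i.e. $\Phi_{-h}=\Phi_h^{-1}$.

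Finally I would invoke the standard even-order principle for symmetric methods: writing the local error as $\Phi_h=\phi_h+h^{p+1}e+O(h^{p+2})$, the adjoint $\Phi_h^{*}=(\Phi_{-h})^{-1}$ has leading error coefficient $(-1)^{p}e$, so $\Phi_h=\Phi_h^{*}$ forces $(-1)^{p}=1$ whenever $e\neq0$; thus $p$ is even, and combined with $p\ge1$ from the first part this gives $p\ge2$ (see \cite{hairer}). The main obstacle I anticipate is the bookkeeping in the implicit solve for $\xi$ — carrying the expansion consistently and justifying smooth dependence uniformly for small $h$ — together with cleanly matching the implicit definition of $\Phi_{-h}$ to the symmetry relations of $R$. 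Once $\Phi_{-h}=\Phi_h^{-1}$ is secured, the second-order conclusion is purely formal and requires no further estimates.
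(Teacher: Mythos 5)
Your proof is correct, but note that the paper itself gives no proof of Proposition \ref{ret_desc}: the result is imported from the retraction-map literature cited alongside the definition (\cite{21MBLDMdD,AbMaSeBookRetraction}), so there is no in-paper argument to compare against. Your two-part structure is precisely the standard route those sources take: (i) well-posedness of the implicit one-step map $\Phi_h$ via the implicit function theorem, then a Taylor expansion in the fibre variable in which property 1 pins the $h=0$ terms and property 2 supplies $A_2-A_1=\mathrm{Id}$, giving $x_{k+1}=x_k+hX(x_k)+O(h^2)$; and (ii) the structural observation that $R=R^*$ forces $R(\xi,-hX(\xi))=(x_{k+1},x_k)$, hence $\Phi_{-h}\circ\Phi_h=\mathrm{Id}$, after which the classical even-order theorem for self-adjoint one-step methods \cite{hairer} upgrades first order to second. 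One point deserves to be made explicit in step (ii): concluding $\Phi_{-h}(x_{k+1})=x_k$ from the fact that the pair $(\xi,-hX(\xi))$ satisfies the defining equation at step $-h$ requires local uniqueness of the solution of that implicit equation, since otherwise $\Phi_{-h}(x_{k+1})$ could a priori be a different root; your IFT preliminary does cover this, because $\xi=x_{k+1}+O(h)$ lies on the unique solution branch selected by the implicit function theorem for small $|h|$. With that observation spelled out, the argument is complete and needs no further estimates.
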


\begin{prop}
\label{ret_desc_2}
Let $M$ and $N$ be $n$ dimensional manifold and $M\ni x\mapsto \phi(x)\eqqcolon y\in N$ be a diffeomorphism. For a given discretization $R$ on $M$, $R_\phi := (\phi\times\phi)\circ R\circ T\phi^{-1}$ is a  discretization on $N$ (see Figure \ref{fig:my_label}).
\end{prop}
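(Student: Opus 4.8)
The plan is to verify that $R_\phi := (\phi\times\phi)\circ R\circ T\phi^{-1}$ satisfies the two defining properties of a discretization map on $N$, working entirely from the definition and the corresponding properties of $R$ on $M$. Throughout, I identify the tangent map $T\phi\colon TM \to TN$ as sending $(x,v)\mapsto (\phi(x), T_x\phi\, v)$, so that $T\phi^{-1}$ sends $(y,w)\in TN$ back to $(\phi^{-1}(y), (T_x\phi)^{-1} w)$ with $x = \phi^{-1}(y)$. First I would check that $R_\phi$ is defined on an open neighborhood of the zero section of $TN$: since $U\subset TM$ is such a neighborhood and $T\phi$ is a diffeomorphism of tangent bundles carrying zero sections to zero sections, the set $T\phi(U)$ is the required open neighborhood in $TN$.

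\emph{Property (1).} I would evaluate $R_\phi$ on the zero section. For $y\in N$ with $x=\phi^{-1}(y)$, we have $T\phi^{-1}(y,0_y)=(x,0_x)$. Applying $R$ gives $R(x,0_x)=(x,x)$ by property (1) for $R$, and then $(\phi\times\phi)(x,x)=(\phi(x),\phi(x))=(y,y)$. Hence $R_\phi(y,0_y)=(y,y)$, which is exactly property (1) on $N$.

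\emph{Property (2).} This is the substantive step and the main obstacle, since it requires tracking the tangent maps of a composition at a point of the zero section. The strategy is to differentiate $R_\phi = (\phi\times\phi)\circ R\circ T\phi^{-1}$ at $(y,0_y)$ using the chain rule, and then to take the difference of the two component maps $R_\phi^2$ and $R_\phi^1$. The key observation is that $\phi\times\phi$ acts diagonally, so its derivative splits as $T\phi$ on each factor; consequently the difference $T_{(y,0_y)}R_\phi^2 - T_{(y,0_y)}R_\phi^1$ factors as $T_x\phi$ applied to $\bigl(T_{(x,0_x)}R^2 - T_{(x,0_x)}R^1\bigr)$ applied to the derivative of $T\phi^{-1}$. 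By property (2) for $R$, the middle factor is ${\rm Id}_{T_xM}$, so the composite collapses to $T_x\phi \circ (T\phi^{-1}$-derivative$)$. The remaining work is to confirm that under the identifications $T_{(y,0_y)}T_yN\simeq T_yN$ and $T_{(x,0_x)}T_xM\simeq T_xM$, the derivative of the fiber-linear map $T\phi^{-1}$ along the zero section is precisely $(T_x\phi)^{-1}$, so that the two factors $T_x\phi$ and $(T_x\phi)^{-1}$ cancel to yield ${\rm Id}_{T_yN}$.

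The care required in that last cancellation is the crux: one must be explicit about the canonical isomorphism identifying the vertical tangent space to a fiber at the zero vector with the fiber itself, and verify that $T\phi$ respects this identification in the expected way (which it does, because $T\phi$ is fiberwise linear, so its restriction to a fiber equals its vertical derivative at the zero section). Once this bookkeeping is in place, the computation is routine linear algebra. I would conclude by noting that, $R_\phi$ being a composition of a diffeomorphism, the local diffeomorphism $R$, and another diffeomorphism, it is automatically a local diffeomorphism, consistent with the remark following the definition.
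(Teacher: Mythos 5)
Your proposal is correct and takes essentially the same route as the paper: both verify the two defining properties directly, with property (2) reducing, via the chain rule and the fiberwise linearity of $T\phi^{-1}$ (whose vertical derivative at the zero section is $(T_x\phi)^{-1}$ under the canonical identifications), to property (2) for $R$, after which $T_x\phi$ and $(T_x\phi)^{-1}$ cancel. The paper performs this same cancellation by explicitly differentiating the curve $s\mapsto T\phi^{-1}(y,su_y)$ at $s=0$, which is just a concrete rendering of the bookkeeping you describe.
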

\begin{proof}
For any given $y\in N$ we have that
\begin{eqnarray*}
R_{\phi}(y,0_y)&=&\left( (\phi\times \phi)\circ  R\circ T\phi^{-1}\right)(y,0_y)\\ 
&=&\left((\phi\times \phi)\circ  R\circ T\phi^{-1}\right)(\phi(x),0_{\phi(x)}) 
\\
&=&(\phi\times \phi)^{-1}R(x, 0_x)\\
&=&(\phi\times \phi)^{-1}(x,x) = (y,y)
\end{eqnarray*}
this proves the first condition. 

Now, given a vector $u_y\in T_yN$ we 
\begin{eqnarray*}
& &\left(T_{(y,0_y)}R^2_{\phi}-T_{(y,0_y)}R^1_{\phi}\right)(y,u_y)\\
&=&\odv{}{s}\Big|_{s=0}[
	(\phi\circ R^1\circ T\phi^{-1})(y,su_y)\\&~-&	(\phi\circ R^2\circ T\phi^{-1})(y,su_y)
	    ]\\
	     &=&T_{y}\phi\left(\mat{\odv{}{s}\Big|_{s=0}[
	    R^1(t (T\phi^{-1})(y,u_y)) \\~-	 R^2(t(T\phi^{-1})(y,u_y))
	    ]}\right)\\
	   &=&T_{y} \phi((T\phi^{-1})(y,u_y) ) =(y,u_y)
	   \end{eqnarray*}
using the linearity of the map $T_y\phi$ and that $R$ is a discretization map.
\end{proof}
Using the inversion map $\I_M$ one can easily show that $R^*_\phi = (\phi\times\phi)\circ R^*\circ T\phi^{-1}$ is the adjoint discretization of $R_\phi$. Further, $R_\phi$ is symmetric if $R$ is symmetric. From definition of $R$ and $R_\phi$, Figure \ref{fig:my_label} commutes.
\begin{figure}
    \centering
    \begin{tikzpicture}
  \matrix (m) [matrix of math nodes,row sep=3em,column sep=4em,minimum width=2em]
  {
     TM & TN \\
     M\times M & N\times N \\};
  \path[-stealth]
    (m-1-1) edge node [left] {$R$} (m-2-1)
            edge node [above] {$T\phi$} (m-1-2)
    (m-2-1.east|-m-2-2) edge node [below] {$\phi\times\phi$}
            (m-2-2)
    (m-1-2) edge node [right] {$R_{\phi}$} (m-2-2);
\end{tikzpicture}
    \caption{$R$ and $R_\phi$ commutes as shown above}
    \label{fig:my_label}
\end{figure}
\section{Continuous time control System }
\label{cts_fed-lin}
Let $M$ be an $m$ dimensional manifold and $U\subset\R[n]$ be open. For each $u\in U$ let $X(\cdot,u)\in \mathfrak{X}(M)$ be a vector field on $M$. Then for a fixed $T>0$, a continuous-time dynamical system on $M$ is given by 
\begin{equation}
\label{ctssys}
\odv{}{t}x(t) = X(x(t),u(t)) ~ \forall~ t\in[0,T],
\end{equation}
with $t\mapsto x(t)\in M$ and $t\mapsto u(t)$ for all $t\in[0,T]$. A point $(x_0,u_0)\in M\times U$ is said to be an equilibrium point of \eqref{ctssys} if $X(x_0,u_0) = 0$. 
\subsection{Feedback Linearization of Continuous Time systems}
Let $M$ and $N$ be two $n$-dimensional manifolds and $\phi:M\lra N$ be a diffeomorphism. Let $X\in \mathfrak{X}(M)$ be a vector field on $M$. Then $X_\phi \coloneqq T\phi\circ X\circ\phi^{-1}$ is a vector field on $N$. Further for the dynamical system 
\begin{equation}
\label{ctssys1}
\odv{}{t}y(t) = X_\phi(y(t),u(t))~\forall t\in[0,T]
\end{equation}
with $y(0) = \phi(x(0))$ satisfy $y(t) = \phi(x(t))$  where $x(t)$ is a solution of (\ref{ctssys}) with 
 $t\in[0,T]$.
\begin{defn}[Feedback Linearization]
\label{feedback_linearization}
Let $\Open{x_0}\ni x_0$ and $\Open{u_0}\ni u_0$ be open neighorhoods around $x_0$ and $u_0$ of $M$ and $U$, respectively. Let $\Open{x_0}\ni x\mapsto\phi(x)\coloneqq y\in N:= \R[n]$ be a diffeomorphism to its image and  $\Open{x_0}\times\Open{u_0}\ni (x,u)\mapsto\psi(x,u)\coloneqq v\in\R[m]$ such that for each fixed $x$, $\psi(x,\cdot):U\lra\R[n]$ is invertible.  A given continuous time system \eqref{ctssys} is said to be (locally)  feedback linearizable around $(x_0,u_0)$ on $\Open{x_0}\times\Open {y_0}$ if there exists matrices $A\in\R[n\times n]$ and $B\in\R[n\times m]$ such that $X_\phi (y,u) = Ay+Bv$
with $v = \psi(\phi^{-1}y,u)$. The feedback linearized dynamical system is given by 
\begin{equation}
 \label{ctslin}
 \odv{}{t}y(t) = Ay(t)+Bv(t) ~\forall t\in [0,T].
\end{equation}
For background on feedback linearization, we refer the reader to \cite{brockett1978feedback,jacubczyk1980linearization,isidori1982feedback} and references therein.
\end{defn}
\section{Numerical Discretization of Continuous-Time Systems} 
\label{num_desc_cts}
Continuous time control systems are implemented via digital controllers using \emph{sample and hold} method where the control input $u$ is held constant at a fixed value between two successive samples i.e., $u(t) = u_k~\forall~t\in[t_k,t_k+h[~\forall k\in\N$ where $h$ is the fixed sampling period. Further, since analytical solutions for \eqref{ctssys} are often not available in closed form, the solutions are to be approximated numerically. 
\begin{defn}
Let $U\subset\R[m]$ be open and for each $u\in U$, $X(\cdot,u)\in\mathfrak{X}(M)$ is a vector field on $x$. Let $R$ be a discretization map on $M$ then using Proposition \ref{ret_desc} a discretization of $X(\cdot,u)$ is defined by 
\begin{equation}
\label{disc_sys_1}
    R^{-1}(x_k,x_{k+1}) = hX(\tau_M(R^{-1}(x_k,x_{k+1})),u_k)
\end{equation}
where the control input $u$ is held constant over the interval $[t_k,t_{k+1}[$ i.e., $u(t) = u_k~\forall~ t\in[t_k,T_{k+1}]$.
\end{defn}
A different choice for $R$ leads to different numerical discretization schemes. For example, on Euclidean spaces ($M=\R[m]$),  $R(x,v) = (x,x+v)$ results in the Explicit Euler discretization scheme  $x_{k+1}= x_k+hX(x_k,u_k)$, while $R^*(x,v) = (x-v,x)$ defines the Implicit Euler Discretization with $x_{k+1}-hX(x_{k+1},u_k) = x_k$. Solving \eqref{disc_sys_1} implicitly for $x_{k+1}$, the sampled discrete-time system can be explicitly written as 
\begin{equation}
 \label{disc_sys} 
 x_{k+1} = F(x_k,u_k;h)
\end{equation}
where $x_k\in M$ and $u_k\in U$ for all $k\in\N$ and $M\times U\ni(x,u)\mapsto f(x,u)\in M$ is a smooth map (if $F$ is not well defined on entire $M$ one may very well work with a local definition of $F$, by replacing $M$ with an open neighborhood around $x_0$ in $M$). From the properties of retraction maps, one can show that at equilibrium point $(x_0,u_0)$ one has $F(x_0,u_0;h)= x_0$.  
\subsection{Feedback Linearization of Discrete time system} 
The idea of feedback linearization can be extended to discrete-time systems as follows. Consider the discrete-time system given by \eqref{disc_sys}. 
\begin{defn}[Feedback Linearization (discrete-time)]
Let $\Open{x_0}\ni x_0$ and $\Open{u_0}\ni u_0$ be open neighborhoods around $x_0$ and $u_0$. Let $\Open{x_0}\ni x\mapsto y \eqqcolon \phi(x) \in N\coloneqq \R[n]$ be a diffeomorphism onto its image, and $\Open{x_0}\times\Open{u_0}\ni(x,u)\mapsto v\eqqcolon\psi(x,u)\in\R[m]$ be such that for each $x$, $\psi(x,\cdot)$ is locally invertible.  The discrete-time system \eqref{disc_sys} is said to be feedback linearizable if there exist matrices $A_h$ and $B_h$ such that  
\begin{equation*}
  \phi(F_h(x,u)) = A_h\phi(x) +B_h\psi(x,u) = A_hy +B_hv   
\end{equation*}
The discrete-time system \eqref{disc_sys} is linearized to 
\begin{equation}
\label{disc_lin}
y_{k+1} = A_hy_k+B_hv_k
\end{equation}
\end{defn}
The feedback linearizability of discrete-time systems has been dealt with in great detail in \cite{grizzle1986feedback,lee1987linearization,jayaraman1993feedback,aranda1996linearization}. For sampled time continuous time system the feedback linearizability is in general not preserved, i.e., a feedback linearizable continuous time system when implemented with sample and hold may not result in a feedback linearizable discrete-time system. The linearizability is not only dependent upon the underlying continuous-time system but also on the choice of discretization (see \cite{grizzle1988feedback}). Using this as our motivation we are interested in the following problem -- \textbf{given a feedback linearizable continuous time system \eqref{ctssys} is it possible to construct a numerical discretization \eqref{disc_sys}
that is feedback linearizable?}
\subsection{Constructing feedback linearizable discretization maps}
Consider the continuous time system given by \eqref{ctssys}. Let $\phi$ and $\psi$ be as in Definition \eqref{feedback_linearization}. Suppose \eqref{ctssys} is feedback linearizable to \eqref{ctslin}. Keeping $v(t) = v_k$ for all $t\in[t_k,t_{k+1}[$, let $R$ be a discretization for \eqref{ctslin} such that it preserves the linearity of \eqref{ctslin} i.e. it results in a discrete system
\begin{equation*}
    y_{k+1} = A_hy_k+B_hv_k
\end{equation*}
 where $A_h$ and $B_h$ are fixed matrices of appropriate order. Given a discretization map $R$ for \eqref{ctslin}, using Proposition \ref{ret_desc_2} one can construct a discretization map 
 \begin{equation}
 \label{4.7}
     R_{\phi^{-1}} = (\phi\times\phi)^{-1}\circ R\circ T\phi
 \end{equation}
and a discretization scheme
\begin{equation}
\label{disc_sys_2}
R^{-1}_{\phi^{-1}}(x_k,x_{k+1}) = hX(\tau_M(R_{\phi^{-1}}^{-1}(x_k,x_{k+1}),u_k))
\end{equation}
\begin{thm}
\label{lin_res}
Let $R$ be a discretization map  for the feedback linearized continuous system \eqref{ctslin} preserving linearity, then the nonlinear system \eqref{ctssys} has a discretization given by \eqref{disc_sys_2} that is feedback linearizable in the discrete-time domain. 
\end{thm}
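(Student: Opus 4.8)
The plan is to transport the nonlinear discretization \eqref{disc_sys_2} living on $M$ to the target space $N=\R[n]$ through the linearizing diffeomorphism $\phi$, show that it becomes precisely the (linearity-preserving) discretization $R$ of the linear vector field $\dot y = Ay+Bv$, and then read off the discrete feedback linearization data. The candidate data is the \emph{same} diffeomorphism $\phi$, the matrices $A_h,B_h$ supplied by the hypothesis, and a discrete feedback $\psi_h$ to be identified below.

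First I would use the definition \eqref{4.7}, $R_{\phi^{-1}}=(\phi\times\phi)^{-1}\circ R\circ T\phi$, to invert it and obtain $R_{\phi^{-1}}^{-1}=T\phi^{-1}\circ R^{-1}\circ(\phi\times\phi)$ (this is exactly the commutativity recorded in Figure \ref{fig:my_label}). Writing $y_k=\phi(x_k)$, this gives $R_{\phi^{-1}}^{-1}(x_k,x_{k+1})=T\phi^{-1}\big(R^{-1}(y_k,y_{k+1})\big)$. Setting $(\bar y,w):=R^{-1}(y_k,y_{k+1})\in T_{\bar y}N$, the base point of $R_{\phi^{-1}}^{-1}(x_k,x_{k+1})$ is $\bar x:=\phi^{-1}(\bar y)$, so the intermediate evaluation nodes on $M$ and $N$ correspond under $\phi$. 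Applying $T\phi$ to both sides of \eqref{disc_sys_2} and using $X_\phi=T\phi\circ X\circ\phi^{-1}$ together with the feedback linearization identity $X_\phi(\bar y,u_k)=A\bar y+B\,\psi(\phi^{-1}(\bar y),u_k)$ turns \eqref{disc_sys_2} into
\[
R^{-1}(y_k,y_{k+1})=h\big(A\,\tau_N(R^{-1}(y_k,y_{k+1}))+B\,\psi(\phi^{-1}(\bar y),u_k)\big).
\]
This is the discretization $R$ of the linear system \eqref{ctslin}, driven by the virtual input $\psi(\phi^{-1}(\bar y),u_k)$.

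To close the argument I would freeze the solution: for the actual solution $(y_k,\bar y,y_{k+1})$ the quantity $v_k^\ast:=\psi(\phi^{-1}(\bar y),u_k)$ is a fixed vector, so $(y_k,\bar y,y_{k+1})$ solves the constant-input linear scheme $R^{-1}(y_k,y_{k+1})=h\big(A\,\tau_N(R^{-1}(y_k,y_{k+1}))+Bv_k^\ast\big)$. Since $R$ preserves linearity and its solution map is unique ($R$ is a local diffeomorphism), this forces $y_{k+1}=A_hy_k+B_hv_k^\ast$. Defining $\psi_h(x_k,u_k):=\psi(\bar x,u_k)$ then yields $\phi(F(x_k,u_k;h))=A_h\phi(x_k)+B_h\psi_h(x_k,u_k)$, which is exactly discrete-time feedback linearizability with data $(\phi,\psi_h,A_h,B_h)$.

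The main obstacle is that the virtual input is evaluated at the intermediate node $\bar y$, which itself depends on the unknown $x_{k+1}$; hence $\psi_h$ is defined only implicitly. Two points need care: (i) that $\bar x$, and therefore $\psi_h$, is a genuine smooth function of $(x_k,u_k)$, which I would obtain from the implicit function theorem applied to \eqref{disc_sys_2} (legitimate because $R$, hence $R_{\phi^{-1}}$, is a local diffeomorphism and $F$ is well defined near the equilibrium); and (ii) that $\psi_h(x_k,\cdot)$ remains invertible in $u_k$, as required by the discrete definition. The latter I would argue for $h$ small: as $h\to0$ the node $\bar x\to x_k$, so $\psi_h(x_k,\cdot)\to\psi(x_k,\cdot)$, which is invertible by hypothesis, and invertibility persists for small $h$ by continuity, again via the implicit function theorem.
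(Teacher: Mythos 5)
Your proposal is correct, and its skeleton is the same as the paper's proof: invert \eqref{4.7} to get $R_{\phi^{-1}}^{-1}=T\phi^{-1}\circ R^{-1}\circ(\phi\times\phi)$, push \eqref{disc_sys_2} through $\phi$ so that $(y_k,y_{k+1})$ satisfies the linearity-preserving scheme $R$ applied to \eqref{ctslin}, and invoke preservation of linearity to conclude $y_{k+1}=A_hy_k+B_hv_k$. Where you genuinely depart from the paper -- and in fact improve on it -- is in identifying the discrete feedback. The paper's proof ends by asserting $v_k=\psi(x_k,u_k)$, i.e.\ it evaluates $\psi$ at the sample point $x_k$; but the relation obtained on $N$ is $R^{-1}(y_k,y_{k+1})=h\bigl(A\bar y+B\,\psi(\phi^{-1}(\bar y),u_k)\bigr)$ with $\bar y=\tau_N(R^{-1}(y_k,y_{k+1}))$, so the virtual input is $\psi(\bar x,u_k)$ with $\bar x=\tau_M(R_{\phi^{-1}}^{-1}(x_k,x_{k+1}))$ the intermediate node, exactly as you say. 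The two agree only when the base point of the scheme is $x_k$ (e.g.\ explicit Euler, which is the paper's worked example); for the midpoint map $R(y,w)=(y-w/2,\,y+w/2)$, which preserves linearity, the paper's formula fails while your implicit feedback $\psi_h(x_k,u_k)=\psi(\bar x,u_k)$ is the correct one -- and it is also the formula the paper itself uses for the adjoint scheme in Theorem \ref{lin_res_adj}, where $v_k=\psi(x_{k+1},u_k)$. Your ``freezing'' step (treating $v_k^\ast$ as a constant input and using local uniqueness of the solution of the implicit scheme, available because $R$ is a local diffeomorphism) and your implicit-function-theorem checks that $\psi_h$ is smooth in $(x_k,u_k)$ and invertible in $u_k$ for small $h$ are precisely the points the paper leaves implicit, and the invertibility check is needed for the discrete-time definition of feedback linearizability to apply at all. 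So: same transport-by-$\phi$ strategy, but your version closes a real gap in the published argument.
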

\begin{proof}
Define $y_k = \phi(x_k)$ and $\psi(x_k,u_k) = v_k$ for all $k$. From \eqref{4.7} we have
\begin{eqnarray*}
    R_{\phi^{-1}}^{-1}(x_k,x_{k+1}) &=& (T\phi^{-1}\circ R^{-1}\circ(\phi\times\phi))(x_k,x_{k+1})\\
    &=& (T\phi^{-1}\circ R^{-1})(y_k,y_{k+1}),
\end{eqnarray*}
and 
\begin{eqnarray*}
&&X(\tau_M(R_{\phi^{-1}}^{-1}(x_k,x_{k+1})),u_k)\\
&=&X\big(\tau_M((T\phi^{-1}\circ R^{-1})(y_k,y_{k+1})),u_k)\big)\\
&=&X_\phi\big(\tau_N(R^{-1}(y_k,y_{k+1})), u_k)\big)
\end{eqnarray*}
From \eqref{disc_sys_2} we have $(x_k,x_{k+1}) = R_{\phi^{-1}}(hX(\tau_M(R_{\phi^{-1}}^{-1}(x_k,x_{k+1})))$ and therefore we have
\begin{eqnarray*}
&&(\phi\times\phi)(x_k,x_{k+1})\\&=& (\phi\times\phi)\circ R_{\phi^{-1}}(hX(\tau_M(R_{\phi^{-1}}^{-1}(x_k,x_{k+1})),u_k)\\
&=&R\big((hX_\phi(\tau_N(R^{-1}(y_k,y_{k+1})),u_k)\big)
\end{eqnarray*}
and therefore we have
\begin{eqnarray*}
(y_k,y_{k+1}) = R(hX_\phi(\tau_N(R^{-1}(y_k,y_{k+1})),u_k)).
\end{eqnarray*}
But since $R$ preserves linearity and  $X_\phi (y,u) = Ay+Bv$
with $v = \psi(\phi^{-1}y,u)$, we have 
\begin{eqnarray*}
    y_{k+1} &=& A_hy_k +B_h\psi(x_k,u_k)\\
    &=& A_hy_k+B_hv_k
\end{eqnarray*}
This concludes the fact that discretization given by \eqref{disc_sys_2} is feedback linearizable under the coordinate change $x\mapsto\phi(x) = y$ and a modified control input $(x,u)\mapsto \psi(x,u) = v$. 
\end{proof}
\begin{rmk}
It is important to note that independent of the order of $R$ one can  ensure an accuracy of $R_{\phi}$ up to the first order. This is due to the fact that while implementing \eqref{ctssys} via the sample and hold, the control input $u$ is to be held constant on the interval $[t_k,t_{k+1}[$. This is in general not possible while simultaneously keeping the linearized control input $v$ constant over $[t_k,t_{k+1}[$ as $v(t) =\psi(x(t),u(t))$. Instead of employing the exact control input $u(t)$ over the interval, we apply the control $u_k$ satisfying $v_k = \psi(x_k,u_k)$ for all $t$ where $x_k$ is the state sampled at $t = t_k$.
\end{rmk}
\subsection{Linearizability of Adjoint discretization}
Given a discretization map, $R_{\phi^{-1}}$ one can construct an adjoint discretization $R^*_{\phi^{-1}}$ as given by the Definition \ref{adj_disc}. From proposition \ref{ret_desc_2} we have 
\begin{equation}
\label{disc_len_232}
    R^*_{\phi^{-1}} = (\phi\times\phi)^{-1}\circ R^*\circ T\phi. 
\end{equation}
Using the definition of $R^*$ and the inversion map, $R^*$ induces a following discretization scheme 
\begin{equation*}
 (R^*)^{-1}(y_{k+1},y_k) = -hX_\phi(\tau_N((R^*)^{-1}(y_{k+1},y_k),v  _k)).
\end{equation*}
Suppose $R$ is such that it preserves the linearity of \eqref{ctssys} for $-h$ as well i.e.,  the following discretization
\begin{eqnarray*}
 R^{-1}(y_k,y_{k+1}) = -hX_\phi(\tau_N(R(y_k,y_{k+1})),u_k)  
\end{eqnarray*}
results in a linear discrete-time system of the form
\begin{equation}
    y_{k+1} = A_{-h}y_k + B_{-h}v_k
\end{equation}
then we have the following result.
\begin{thm}
\label{lin_res_adj}
 Let $R$ be a discretization of \eqref{ctslin} preserving linearity for $h$ as well as $-h$. Let $R^*$ be the adjoint of $R$, then $R^*_{\phi^{-1}}$ given by \eqref{disc_len_232} results in a discretization 
\begin{equation}
     x_{k+1} = F^*(x_{k},u_{k};h)
\end{equation}
i.e., feedback is linearizable. Moreover, the linearizing coordinate is given by $x\mapsto\phi(x):=y$ and the linearized system is given by 
\begin{equation}
\label{disc_sys_lin2}
    y_k = A_{-h}y_{k+1}+B_{-h}v_k
\end{equation}
with $v_k = \psi(x_{k+1},u_k)$. 
\end{thm}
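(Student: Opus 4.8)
The plan is to closely parallel the proof of Theorem \ref{lin_res}, but with $R$ replaced by its adjoint $R^*$ and with careful tracking of the two effects the adjoint introduces: the inversion map $\I_N$, which interchanges the two slots, and the fiberwise sign flip, which turns the step $h$ into $-h$. As candidate linearizing data I would take $y_k = \phi(x_k)$ for every $k$ and the modified feedback $v_k = \psi(x_{k+1},u_k)$, the latter anticipating that the adjoint (implicit) scheme samples the feedback at the future state. The goal is then to show that, expressed through $y_k$ and $v_k$, the implicit scheme generated by $R^*_{\phi^{-1}}$ collapses to \eqref{disc_sys_lin2}.

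The first block of steps is formal and mirrors Theorem \ref{lin_res}. Inverting \eqref{disc_len_232} gives $(R^*_{\phi^{-1}})^{-1} = T\phi^{-1}\circ (R^*)^{-1}\circ(\phi\times\phi)$, so that $(R^*_{\phi^{-1}})^{-1}(x_k,x_{k+1}) = T\phi^{-1}\big((R^*)^{-1}(y_k,y_{k+1})\big)$. Applying $\phi\times\phi$ to the implicit equation defining the scheme, and using $(\phi\times\phi)\circ R^*_{\phi^{-1}} = R^*\circ T\phi$ together with the linearity of $T\phi$ on fibers and $X_\phi = T\phi\circ X\circ\phi^{-1}$, I would transport the entire scheme into the coordinates on $N$ and obtain
\[
(R^*)^{-1}(y_k,y_{k+1}) = hX_\phi\big(\tau_N((R^*)^{-1}(y_k,y_{k+1})),u_k\big),
\]
which is precisely the adjoint discretization of the linear vector field $X_\phi$ on $N$ with step $h$.

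The decisive step is to unwind the adjoint. Writing $R^* = \I_N\circ R\circ\nu$ with $\nu\colon(n,w)\mapsto(n,-w)$ the fiber negation, one gets $(R^*)^{-1} = \nu\circ R^{-1}\circ\I_N$, hence $(R^*)^{-1}(y_k,y_{k+1}) = \nu\big(R^{-1}(y_{k+1},y_k)\big)$. Since $\nu$ leaves the base point untouched and negates the fiber, substituting into the displayed scheme and reading off the fiber component yields
\[
R^{-1}(y_{k+1},y_k) = -hX_\phi\big(\tau_N(R^{-1}(y_{k+1},y_k)),u_k\big),
\]
which is exactly the discretization of $X_\phi$ by $R$ at step $-h$ with the two arguments interchanged. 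Invoking the hypothesis that $R$ preserves linearity for $-h$ and that $X_\phi(y,u)=Ay+Bv$ with $v=\psi(\phi^{-1}y,u)$, the linearity-preservation applied to the ordered pair $(y_{k+1},y_k)$ returns $y_k = A_{-h}y_{k+1}+B_{-h}v_k$; moreover the base point $\tau_N(R^{-1}(y_{k+1},y_k))$ is the future slot $y_{k+1}$ (as one checks already on the implicit Euler model $R^*(x,v)=(x-v,x)$), so $v_k = \psi(\phi^{-1}(y_{k+1}),u_k) = \psi(x_{k+1},u_k)$, matching \eqref{disc_sys_lin2}. Well-definedness of $x_{k+1}=F^*(x_k,u_k;h)$ then follows because $R^*_{\phi^{-1}}$ is itself a discretization map, being the adjoint of $R_{\phi^{-1}}$ as noted after Proposition \ref{ret_desc_2}, hence a local diffeomorphism near the equilibrium, so the implicit relation is locally solvable for $x_{k+1}$.

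I expect the main difficulty to be bookkeeping rather than conceptual: the inversion $\I_N$ swaps $y_k$ with $y_{k+1}$ while $\nu$ reverses the sign of the step, and these two effects must be carried simultaneously so that the argument order in the final relation is $y_k = A_{-h}y_{k+1}+B_{-h}v_k$ and the feedback is sampled at $x_{k+1}$ rather than $x_k$. The one genuinely substantive ingredient, which is exactly why the statement assumes it explicitly, is that $R$ preserve linearity at the negative step $-h$; without that hypothesis the swapped, sign-reversed scheme is not guaranteed to be affine and the conclusion fails.
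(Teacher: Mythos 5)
Your proof is correct and is exactly what the paper intends: the paper omits this proof entirely, saying only that it ``follows a similar process to that of Theorem \ref{lin_res},'' and your argument carries out precisely that process --- transport the scheme to $N$ via \eqref{disc_len_232}, unwind the adjoint as $(R^*)^{-1}=\nu\circ R^{-1}\circ \I_N$ to land on the $R$-scheme at step $-h$ with swapped arguments, and invoke the linearity-preservation hypothesis at $-h$. One minor caution, which is an imprecision inherited from the paper's own statement rather than a flaw in your reasoning: identifying the base point $\tau_N(R^{-1}(y_{k+1},y_k))$ with $y_{k+1}$, and hence writing $v_k=\psi(x_{k+1},u_k)$, is checked only on Euler-type maps and does not hold for an arbitrary discretization map (e.g.\ a midpoint-type map, where the base point lies between $y_k$ and $y_{k+1}$), but this matches the theorem exactly as stated.
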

The proof of the above theorem follows a similar process to that of Theorem \ref{lin_res} and is hence omitted. Moreover, the control input $u_k
$ is to be calculated implicitly from the control input $v_k$. Similar to $R_{\phi^{-1}}$, $R_{\phi^{-1}}^*$ is also accurate upto first order. 
\section{Constructing Higher Order Discretizations}
\label{high_desc}
\begin{defn}[Global and truncated error] Consider the continuous time system \eqref{ctssys}, then for a given discretization \eqref{disc_sys} the $n$-step global error is defined as $$e(k) \coloneqq x(t_k)-x_k,$$ where $x(t_k)$ is the exact solution of \eqref{ctssys} evaluated at $t= hk$. The one-step truncated error at $t_k$ is 
\begin{equation*}
    \tilde{x}_k = \left(F(x(t_k),u_k))-x(t_k)\right)/{h}
\end{equation*}
\end{defn}
For an $r$-order discretizations, the one-step truncation error is bounded above by $\norm{\tilde{x}_k}\leq K\norm{h}^{r}$ \cite{suli2003introduction}. For first-order methods, the error varies linearly with the stepsize, therefore one requires a smaller stepsize to have better accuracy. For instance, in order to have an accuracy of an order of $10^{-4}$ the stepsize $h$ is to be of the order of $10^{-4}$, whereas for second-order methods a stepsize of an order of $10^{-2}$ shall suffice. The discretizations $R$ in Definition \ref{ret_desc} are in general first order. However, if $R$ is symmetric then it is accurate up to second accurate. This serves as our motivation to construct symmetric discretization.  
\subsection{Symmetric Discretizations}
Let $M$ be an $n$ dimensional manifold and $X\in\mathfrak{X}(M)$ be a vector field on $M$. Let $R$ be a discretization map on $M$ and $R^*$ be its associated adjoint. Define by composition a discretization scheme as follows :
\begin{equation}
\label{symm_int}
\begin{split}
R^{-1}(x_k,x_{k+1/2}) &= \frac{h}{2}X(\tau_M(R^{-1}(x_k,x_{k+1/2})))\\    
(R^*)^{-1}(x_{k+1/2},x_{k+1}) &= \frac{h}{2}X(\tau_M((R^*)^{-1}(x_{k+1/2},x_{k+1})))
\end{split}
\end{equation}
In the above equation, $x_{k+1/2}\in M$ is to be taken as an intermediate point and is to be solved implicitly to get a discrete system of type \eqref{disc_sys}. For any $x,y\in M$ the following is true 
\begin{equation}
\label{5.2}
  (R^*)^{-1}(x,y) =  I_{TM}\circ R^{-1}(y,x)  
\end{equation} and
\begin{equation}
\label{5.3}
  X(\tau_M((R^*)^{-1}(x,y)))=X(\tau_M(R^{-1}(y,x)))  
\end{equation}
where $ TM\ni(x,v_x)\mapsto I_{TM}(x,v_x)=(x,-v_x)\in TM$. 
\begin{prop}
Discretization given by \eqref{symm_int} is symmetric.
\end{prop}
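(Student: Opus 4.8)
The plan is to regard the two half-steps in \eqref{symm_int} as defining, by composition, a single discretization map $\bar R$, and to prove that it equals its adjoint $\bar R^{*}$. Equivalently, I would show that the scheme \eqref{symm_int} is time-reversible: if the triple $(x_k,x_{k+1/2},x_{k+1})$ solves \eqref{symm_int} for step $h$, then $(x_{k+1},x_{k+1/2},x_k)$ solves the same pair of equations for step $-h$, with the intermediate node $x_{k+1/2}$ left unchanged. This reversibility is exactly the statement $\bar R=\bar R^{*}$, because the scheme induced by an adjoint discretization is the original scheme with its endpoints swapped and $h$ negated, as follows from Definition \ref{adj_disc} together with the identities \eqref{5.2}--\eqref{5.3}. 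The engine of the proof is precisely these two identities.

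First I would rewrite the second ($R^{*}$) half-step of \eqref{symm_int}. Applying \eqref{5.2} with $(x,y)=(x_{k+1/2},x_{k+1})$ replaces $(R^{*})^{-1}(x_{k+1/2},x_{k+1})$ by $I_{TM}\circ R^{-1}(x_{k+1},x_{k+1/2})$, while \eqref{5.3} keeps the base point of the right-hand side equal to $\tau_M(R^{-1}(x_{k+1},x_{k+1/2}))$. Since $I_{TM}$ is an involution that negates the fibre coordinate and fixes the base, pushing it through both sides turns the second equation into
\[
R^{-1}(x_{k+1},x_{k+1/2}) = -\frac{h}{2}\, X\!\left(\tau_M\!\left(R^{-1}(x_{k+1},x_{k+1/2})\right)\right),
\]
which is the first ($R$) half-step of the reversed scheme, started from $x_{k+1}$ with step $-h$, and producing the same intermediate node $x_{k+1/2}$.

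Next I would treat the first ($R$) half-step symmetrically. Solving \eqref{5.2} for $R^{-1}$ (again using that $I_{TM}$ is an involution) gives $R^{-1}(x_k,x_{k+1/2})=I_{TM}\circ (R^{*})^{-1}(x_{k+1/2},x_k)$, and \eqref{5.3} preserves the relevant base point. Substituting into the first equation of \eqref{symm_int} and pushing $I_{TM}$ through yields
\[
(R^{*})^{-1}(x_{k+1/2},x_k) = -\frac{h}{2}\, X\!\left(\tau_M\!\left((R^{*})^{-1}(x_{k+1/2},x_k)\right)\right),
\]
which is the second ($R^{*}$) half-step of the reversed scheme from $x_{k+1/2}$ to $x_k$ with step $-h$. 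Reading the two transformed equations together, the data $(x_{k+1},x_{k+1/2},x_k)$ satisfies \eqref{symm_int} with $h$ replaced by $-h$; this establishes the reversibility and hence $\bar R=\bar R^{*}$.

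As a structural cross-check I would phrase the argument through integrators: writing $\Phi^{R}_{h/2}$ and $\Phi^{R^{*}}_{h/2}$ for the one-step maps induced by $R$ and $R^{*}$, the computation above is the identity $(\Phi^{R}_{h})^{*}=\Phi^{R^{*}}_{h}$, and \eqref{symm_int} is the composition $\Phi^{R^{*}}_{h/2}\circ\Phi^{R}_{h/2}$; symmetry then follows because adjunction reverses compositions and $R^{**}=R$. The main obstacle is bookkeeping rather than depth: one must track carefully which endpoint feeds each half-step and verify that the intermediate node $x_{k+1/2}$ is genuinely \emph{fixed} under the reversal, so that the reversed data solves the same two equations and not merely a scheme with a shifted intermediate stage. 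The clean behaviour of $\tau_M$ under $I_{TM}$ in \eqref{5.3} and the involutivity of $I_{TM}$ are exactly what make this bookkeeping close.
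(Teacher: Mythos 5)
Your proposal is correct and follows essentially the same route as the paper: both proofs establish symmetry by checking that swapping $(x_k,x_{k+1})\leftrightarrow(x_{k+1},x_k)$ and $h\leftrightarrow -h$ reproduces \eqref{symm_int}, using exactly the identities \eqref{5.2}--\eqref{5.3} and the involutivity of $I_{TM}$ to swap the roles of $R$ and $R^*$ while fixing the intermediate node $x_{k+1/2}$. The only (immaterial) difference is the direction of the implication—you derive the reversed scheme from the original, the paper derives the original from the reversed—and your closing remark about adjoints reversing compositions is a nice reformulation but not needed.
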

\begin{proof}
Replacing $(x_k,x_{k+1})$ and $h$ with $(x_{k+1},x_k)$ and $-h$ in \eqref{symm_int} we have
\begin{equation*}
\begin{split}
R^{-1}(x_{k+1},x_{k+1/2}) &= -\frac{h}{2}X(\tau_M(R^{-1}(x_{k+1},x_{k+1/2})))\\    
(R^*)^{-1}(x_{k+1/2},x_k) &= -\frac{h}{2}X(\tau_M((R^*)^{-1}(x_{k+1/2},x_{k})))
\end{split}
\end{equation*}
Using \eqref{5.2} and \eqref{5.3} we get
\begin{align*}
(R^*)^{-1}(x_{k+1/2},x_{k+1})&= \frac{h}{2}X(\tau_M((R^*)^{-1}(x_{k+1/2},x_{k+1}))),\\
R^{-1}(x_k,x_{k+1/2})&= \frac{h}{2}X(\tau_M(R^{-1}(x_k,x_{k+1/2})))
\end{align*}
i.e., \eqref{symm_int}, thereby proving \eqref{symm_int} is symmetric.
\end{proof}
\begin{rmk}
Since \eqref{symm_int} is symmetric and therefore is second order.    
\end{rmk}
For nonautonomous systems, the control input $u_k$ is held constant between $t \in [t_k,t_{k+1}[$, \eqref{symm_int} is then modified as
\begin{equation}
 \label{symm_int_na}
 \begin{split}
R^{-1}(x_k,x_{k+1/2}) &= \frac{h}{2}X(\tau_M(R^{-1}(x_k,x_{k+1/2})),u_k)\\    
(R^*)^{-1}(x_{k+1/2},x_{k+1})\\ = &\frac{h}{2}X(\tau_M((R^*)^{-1}(x_{k+1/2},x_{k+1})),u_k)
\end{split}
\end{equation}
Under closed-loop performance i.e., applying a feedback control $u_k =u(x_k)$ \eqref{symm_int} loses its symmetric nature. This can be overcome by employing multirate sampling methods.
\subsection{Multirate Sampling}
\begin{defn}[Multirate Sampling]
Consider a continuous time system given by \eqref{ctssys}. Let $h$ be the sampling time interval i.e., $x_k= x(t_k)$ and $t_{k+1} = t_k+h$. For a fixed $N\in \{1,2,\ldots,n\}$, and for each $i\in\{1,2,\ldots N\}$ let $(x,u)\mapsto F_i(x,u)\eqqcolon F_i^u(x)$ be discretizations of \eqref{ctssys}. The $N^{th}$ step evolution is then given by 
\begin{equation}
  x_{k+N} =F_N^{u_{k+N-1}}\circ\ldots\circ F_2^{u_{k+1}}\circ F_1^{u_k}(x_k)   
\end{equation}
Sampling states $x_k$ at a rate $N$ times that of control input $u_k$ we get a multistep discretization given by 
\begin{equation}
 x_{k+N} = \bar F(x_k,u_k,\ldots u_{k+N-1}) 
\end{equation}
The control input $u_k,\ldots,u_{k+N-1}$ are to be computed a priori at $t_k$ are functions of the state $x_{Nm}$, $Nm<k$.
\end{defn}
Setting $N=2$ and $F_1$ and $F_2$ as $F$ and $F^*$ from \eqref{disc_sys} and \eqref{4.7} respectively. Under multirate sampling, the discrete system generated by \eqref{symm_int_na} is given by 
\begin{eqnarray*}
    x_{k+1/2} &=& F(x_k,u_k;h/2)\\
    x_{k+1/2} &=& F(x_{k+1},u_{k+1};-h/2).
\end{eqnarray*}
Let $u_k = u(x_k)$ be a closed-loop control input for discretization \eqref{disc_sys_1}. Setting $u_{k+1} = u(x_{k+1})$ renders \eqref{symm_int_na} symmetric and the discretization is given by 
\begin{equation}
\label{symm_int_Na1}
 F(x_k,u(x_k);h/2) =F(x_{k+1},u(x_{k+1});-h/2)  
\end{equation}
which is symmetric and therefore is accurate up to second order. Corresponding continuous time control input is 
\begin{eqnarray*}
    u(t) = \begin{cases}
    &u_k,~ \quad t\in[t_k,t_{k+1/2}[\\
    &u_{k+1},~\quad t\in[t_{k+1/2},t_{k+1}[
    \end{cases}
\end{eqnarray*}
where $t_{k+1/2} = t_k+\frac{h}{2}$ and $t_{k+1}= t_k+h$.
\begin{thm}
\label{thm5.2}
Consider the continuous time system given by \eqref{ctssys}. Let $R_\phi$ be its discretization as given by \eqref{4.7} and \eqref{disc_sys_2} be its associated discretization. Then one can construct a symmetric discretization given by \eqref{symm_int_na}, the resulting discrete system given by
\begin{eqnarray}
\label{symm_disc_sys}
F(x_k,u_k;h/2) = F(x_{k+1},u_{k+1};-h/2)  
\end{eqnarray}
is symmetric and is of second order. Moreover, \eqref{symm_disc_sys} is feedback linearizable under coordinates $x\mapsto \phi(x)\eqqcolon y$ and the modified control input is given by $(x,u)\mapsto \psi(x,u)\eqqcolon v$. The linearized system is given by 
\begin{equation}
\label{disc_symm_lin}
    A_{h'}y_{k} + B_{h'}v_k =  A_{-h'}y_{k+1} + B_{-h'}v_{k+1}
\end{equation}
where $h' = h/2$.
\end{thm}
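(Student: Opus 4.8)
The plan is to read the symmetric scheme \eqref{symm_disc_sys} as the composition of one forward half-step governed by $R_{\phi^{-1}}$ (as in \eqref{4.7}, \eqref{disc_sys_2}) and one backward half-step governed by its adjoint $R^*_{\phi^{-1}}$ (as in \eqref{disc_len_232}), and then to invoke the two linearization results already proved: Theorem \ref{lin_res} for the first half-step and Theorem \ref{lin_res_adj} for the second. The symmetry and second-order accuracy are not new content here: they follow from the proposition showing that the composed scheme \eqref{symm_int_na}, in its closed-loop multirate realization \eqref{symm_int_Na1}, is symmetric, together with the remark that a symmetric discretization is second order. Thus the substantive task is only to verify feedback linearizability under the same pair $(\phi,\psi)$ and to identify the linearized relation \eqref{disc_symm_lin}.

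First I would fix the coordinate and feedback maps globally, setting $y_j = \phi(x_j)$ and $v_j = \psi(x_j,u_j)$ for every sample index $j$, including the intermediate index $k+1/2$, and I would abbreviate $h' = h/2$. The intermediate state $x_{k+1/2}\in M$ is defined implicitly by the first line of \eqref{symm_int_na} with control $u_k$, and since $\phi$ is a diffeomorphism, $y_{k+1/2} = \phi(x_{k+1/2})$ is well defined. Applying Theorem \ref{lin_res} to this forward half-step, which is exactly the discretization \eqref{disc_sys_2} associated with $R_{\phi^{-1}}$ but carrying step $h'$ and control $u_k$, and using that $R$ preserves linearity for the positive step, I obtain $y_{k+1/2} = A_{h'}y_k + B_{h'}v_k$.

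Next I would treat the second line of \eqref{symm_int_na}, the adjoint half-step associated with $R^*_{\phi^{-1}}$, now carrying control $u_{k+1}$ in the multirate assignment. Here the pair $(x_{k+1/2},x_{k+1})$ plays the role of $(x_k,x_{k+1})$ in Theorem \ref{lin_res_adj}, with step $h'$; since $R$ is assumed to preserve linearity for $-h'$ as well, that theorem yields $y_{k+1/2} = A_{-h'}y_{k+1} + B_{-h'}v_{k+1}$ with $v_{k+1} = \psi(x_{k+1},u_{k+1})$. Both half-steps produce the same intermediate value $\phi(x_{k+1/2}) = y_{k+1/2}$, so equating the two expressions eliminates the intermediate variable and gives precisely \eqref{disc_symm_lin}, namely $A_{h'}y_k + B_{h'}v_k = A_{-h'}y_{k+1} + B_{-h'}v_{k+1}$. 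This exhibits the composite as feedback linearizable under $x\mapsto\phi(x)=y$ and $(x,u)\mapsto\psi(x,u)=v$, completing the argument.

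The main obstacle I anticipate is bookkeeping rather than depth: I must make sure that the hypothesis that $R$ preserves linearity for $\pm h$ is correctly transported to the half-step sizes $\pm h' = \pm h/2$ (it is, by re-reading the hypothesis with $h$ replaced by $h/2$), and that the multirate control convention, with $u_k$ held on $[t_k,t_{k+1/2})$ and $u_{k+1}$ on $[t_{k+1/2},t_{k+1})$, is exactly the one under which \eqref{symm_int_na} retains symmetry, so that the two invocations of the linearization theorems rest on mutually consistent data $(v_k,v_{k+1})$. A secondary point to confirm is well-posedness of the implicit intermediate step, i.e. that $x_{k+1/2}$ is uniquely determined near the equilibrium; this follows from the local-diffeomorphism property of discretization maps noted after the first definition, which also legitimizes passing freely between $x_{k+1/2}$ and $y_{k+1/2}$.
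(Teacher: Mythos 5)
Your proposal is correct and is essentially the paper's own argument: the paper likewise takes symmetry as immediate from the definition, applies $\phi$ to both sides of \eqref{symm_disc_sys}, and uses the already-established linearization identities $\phi(F(x,u;\pm h')) = A_{\pm h'}\phi(x)+B_{\pm h'}\psi(x,u)$ (i.e.\ Theorems \ref{lin_res} and \ref{lin_res_adj}) to arrive at \eqref{disc_symm_lin}. Your explicit bookkeeping through the intermediate point $y_{k+1/2}$ is just a more detailed writing of the same step, since both sides of \eqref{symm_disc_sys} equal $x_{k+1/2}$.
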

\begin{proof}
\eqref{symm_disc_sys} being symmetric is trivial from the definition itself. Since $\phi$ is a local diffeomorphism we have 
\begin{eqnarray*}
    \phi(F(x_k,u_k;h')) &=& \phi(F(x_{k+1},u_{k+1});-h')\\
A_{h'}\phi(x_k)+B_{h'}\psi(x_k,u_k) &=& A_{-h'}\phi(x_{k+1})+\\ &~&~B_{-h'}\psi(x_{k+1},u_{k+1})\\
    A_{h'}y_{k}+B_{h'}v_k &=&  A_{-h'}y_{k+1} + B_{-h'}v_{k+1},
\end{eqnarray*}
thereby completing the proof.
\end{proof}
\begin{rmk}
The control input $v_k$, $v_{k+1}$ can be computed apriori at $t=t_k$ from $z_k$. The control input $u_k$ and $u_{k+1}$ are than computed implicitly solving $\psi(x_k,u_k)= v_k$, with $x_k =\phi^{-1}(z_k)$. 
\end{rmk}
\begin{rmk}
 Theorem \ref{thm5.2} is different from the result in \cite{grizzle1988feedback} in the sense that here the rate of multi-sampling is fixed apriori while \cite{grizzle1988feedback} the order of discretization is chosen so that the resulting scheme is feedback linearizable.
\end{rmk}
\section{Example}
\label{exmpl}
 In order to demonstrate the ideas discussed we consider the following example. Consider the following dynamical system evolving on $M=\R[2]$ and $U=\R$. 
 \begin{equation}
 \label{ex1}
     \odv{}{t}\pmat{x_1(t)\\x_2(t)} = \pmat{a\sin(x_2(t))\\-(x_1(t))^2+u(t)}
 \end{equation}
 where $a\in\R\setminus\{0\}$ is fixed and given. Define $\phi(x_1,x_2) =(x_1,a\sin(x_2))\coloneqq (y_1,y_2)$ and $\psi(x_1,x_2,u) =(-(x_1)^2+u)a\cos(x_2) \coloneqq v$. The linearized system is then given by 
 \begin{equation}
 \label{ex_lin}
     \odv{}{t}\pmat{y_1(t)\\y_2(t)} = \pmat{y_2(t)\\v(t)}.
 \end{equation}
 Choosing the Explicit Euler discretization i.e., $\R[2]\times\R[2]\ni(y,w)\mapsto R(y,w) = (y,y+w))\in\R[2]\times\R[2]$, the discrete system is given by
 \begin{equation}
 \label{exln_dis}
 \begin{split}
 y_{1,k+1} &= y_{1,k}+hy_{2,k},\\
 y_{2,k+1} &= y_{2,k}+hv_k.
 \end{split}
 \end{equation}
 Lifting $R$ via $\phi^{-1}$ to get a discretization for \eqref{ex1}, $R_{\phi^{-1}}$ induces the following discretization scheme
 \begin{equation}
 \tag{EES}
 \label{ex_dis_el}
 \begin{split}
 x_{1,k+1} &= x_{1,k}+ha\sin(x_{2,k})\\
 x_{2,k+1} &= \arcsin(\sin(x_{2,k})+h(-x_{1,k}^2+u_k)\cos(x_{2,k})).
 \end{split}
 \end{equation}
which can be compactly written as $x_{k+1}= F(x_k,u_k;h)$ with $F((x_1,x_2),u;h) = (x_1+ha\sin(x_2),\arcsin(\sin(x_2)+h(-x_1^2+u)\cos(x_2)))$. It is easy to see that \eqref{ex_dis_el} is feedback linearizable for around the equilibrium point $(0,0,0)$. 

The associated adjoint scheme $R^*(y,w) = (y-w,w)$ defines the Implicit Euler Discretization. Lifting $R^*$, $R_{\phi^{-1}}^*$ induces the following discretization scheme 
\begin{equation}
\tag{IES}
\label{IES1}
x_{k} = F(x_{k+1},u_k;-h)
\end{equation}
A symmetric integrator can be defined for Example \eqref{ex1} by composing $R_{\phi^{-1}}$ and $R_{\phi^{-1}}^*$. Using multi-rate sampling with states sampled at a 
 rate twice that of control. The symmetric integrator for \eqref{ex1} is given by 
 \begin{equation}
\tag{SES}
 \label{ex_dis_sym}
 F(x_{k+1},u_{k+1};-h) = F(x_{k},u_k;h).
\end{equation}
One can check that \eqref{ex_dis_sym} is also feedback linearizable and the linearized system is given by 
\begin{equation}
 A_{-h'}y_{k+1} +B_{-h'}v_{k+1} = A_{h'}y_{k}+B_{h'}v_k  
\end{equation}
with $A_{h'} = \pmat{1&h'\\0&1}$ and $B_{h} = \pmat{0\\h'}$ for $h'= h/2$ and $v_k = \psi(x_k,u_k)$.

The three schemes were simulated under the following parameters: For all three schemes, we had $a=1$, and the initial condition was chosen as $x(0) = (0.25,\pi/6)$. For $K=\bmat{-10&-10}$, the control schemes were chosen as in Table \ref{tab:my-table}. 
\begin{table}
\begin{centering}
\begin{tabular}{|c|c|}
\hline
\textbf{Discretization}& \textbf{Associated Control}\\ \hline\eqref{ex_dis_el} & $v_k = Ky_k$\\\hline
\eqref{IES1} & $v_k = Ky_{k+1}$\\ \hline\eqref{ex_dis_sym} & $v_k = Ky_k,~~ v_{k+1} = Ky_{k+1}$  \\ \hline
\end{tabular}
\caption{\centering{Control input for various discretization schemes. ($K = \bmat{-10&-10}$)}}
\label{tab:my-table}
\end{centering}
\end{table}
The schemes were simulated for various step sizes and the error was compared with the standard ODE solver (ODE45) available in MATLAB. The simulation was run for $t\in[0,5]$ and the trajectories for the discrete-time system \eqref{ex_dis_el} and continuous-time system \eqref{ex1} are plotted in Figure \ref{fig-state}. The corresponding error is plotted in Figure \ref{fig-error}. The control input is plotted in \ref{fig-control}. Similarly, the system trajectory  for \eqref{ex_dis_sym} is plotted in Figure \ref{fig-state_sym}, and the control input and error in Figures\ref{fig-control_sym} and \ref{fig-error_sym} respectively. To show the multi-rate sampling,  A zoomed-in version of the control signal around $t_k=0$ is also plotted in figure \ref{fig-zoom_sym}. It can be seen that the absolute error for \eqref{ex_dis_sym} is significantly smaller than that of \eqref{ex_dis_el}. In Figure \ref{fig-error_pct}, we compare the (percentage) relative error $100*\norm{e(t_k)}/\norm{x(t_k)}$, for \eqref{ex_dis_el}, \eqref{IES1} and \eqref{ex_dis_sym}. While percentage error increases for \eqref{ex_dis_el} and \eqref{IES1}, the absolute error is actually quite small (the increase is partly because of the precision errors when $x$ becomes small). For various stepsizes,  Table \ref{tab:my-table1} records the order of the error magnitude for the various discretization. It can be seen that \eqref{ex_dis_sym} outperform \eqref{ex_dis_el} and \eqref{IES1} and the error is proportional to $h^2$. 
\begin{table}
\begin{centering}
\begin{tabular}{l|lll|}
\cline{2-4}
 & \multicolumn{3}{l|}{Order of error magnitude}              \\ \hline
\multicolumn{1}{|l|}{Stepsize} & \multicolumn{1}{l|}{\eqref{ex_dis_el}} & \multicolumn{1}{l|}{\eqref{IES1}} & \eqref{ex_dis_sym} \\ \hline
\multicolumn{1}{|l|}{$h = 10^{-1} $} & \multicolumn{1}{l|}{$10^{-1}$} & \multicolumn{1}{l|}{$10^{-1}$} & $10^{-2}$  \\ \hline
\multicolumn{1}{|l|}{$h= 10^{-2}$}  & \multicolumn{1}{l|}{$10^{-3}$} & \multicolumn{1}{l|}{$10^{-3}$} & $10^{-4}$  \\ \hline
\multicolumn{1}{|l|}{$h=10^{-3}$}   & \multicolumn{1}{l|}{$10^{-4}$} & \multicolumn{1}{l|}{$10^{-4}$} & $10^{-6}$ \\ \hline
\multicolumn{1}{|l|}{$h=10^{-5}$}   & \multicolumn{1}{l|}{$10^{-5}$} & \multicolumn{1}{l|}{$10^{-4}$} & $10^{-8}$ \\ \hline
\end{tabular}%
\caption{\centering{Order of error magnitude for various step sizes.}}
\label{tab:my-table1}
\end{centering}
\end{table}
\begin{figure}
\begin{center}
\begin{tikzpicture}
\begin{axis}[
    width=\linewidth,
    height=5cm,
    at ={(0,-0.25)},
    font=\sf,
    xlabel={$t_k$ (s)},
    ylabel={ },
    xmin= 0, xmax=5,
    ymin= -0.4, ymax=0.6,
    ytick={-0.4,-0.2,0.0,0.2,0.4,0.6},
    xtick={0,1,2,3,4,5},
    legend pos=north east,
    no markers,
    grid=none,
    minor tick num =5,
    minor tick style={draw=none},
    minor grid style={thin,color=black!10},
    major grid style={thin,color=black!10},
    tick align=outside,
    axis x line*=none,
    axis y line*=none,
]
\addlegendentry{$x_{1,k}$}
\addplot[color=blue,smooth,thick] table [x=t,y=X1,col sep=comma]{data_ees.csv};
\addlegendentry{$x_{2,k}$}
\addplot[color=red,smooth,thick] table [x=t,y=X2,col sep=comma]{data_ees.csv};
\addlegendentry{$x_1(t_k)$}
\addplot[color=green,smooth,dashed,thick] table [x=t,y=x1,col sep=comma]{data_ees.csv};
\addlegendentry{$x_2(t_k)$}
\addplot[color=black,smooth,dashed,thick] table [x=t,y=x2,col sep=comma]{data_ees.csv};
\end{axis}
\end{tikzpicture}
\caption{System State $x_k$ for \eqref{ex_dis_el} plotted against exact discretization (ODE45) $x(t_k)$ for stepsize $h=10^{-2}$ and $t_k\in[0,5]$.} 
\label{fig-state}
\end{center}
\end{figure}
\begin{figure}
\begin{center}
\begin{tikzpicture}
\begin{axis}[
    width=\linewidth,
    height=5cm,
    at ={(0,-0.25)},
    font=\sf,
    xlabel={$t_k$ (s)},
    ylabel={ },
    xmin= 0, xmax=5,
    ymin= -10, ymax=2.0,
    ytick={-8,-6,-4,-2,0,0.0},
    xtick={0,1,2,3,4,5},
    legend pos=south east,
    no markers,
    grid=none,
    minor tick num =5,
    minor tick style={draw=none},
    minor grid style={thin,color=black!10},
    major grid style={thin,color=black!10},
    tick align=outside,
    axis x line*=none,
    axis y line*=none,
]
\addlegendentry{$u_{k}$}
\addplot[color=blue,smooth,thick] table [x=t,y=U1,col sep=comma]{data_ees.csv};
\addlegendentry{$u(t_k)$}
\addplot[color=red,smooth,dashed,thick] table [x=t,y=u1,col sep=comma]{data_ees.csv};
\end{axis}
\end{tikzpicture}
\caption{Control input $u_k$ for \eqref{ex_dis_el} plotted against exact discretization (ODE45) $u(t_k)$ for stepsize $h=10^{-2}$ and $t_k\in[0,5]$.} 
\label{fig-control}
\end{center}
\end{figure}
\begin{figure}
\begin{center}
\begin{tikzpicture}
\begin{axis}[
    width=\linewidth,
    height=5cm,
    at ={(0,-0.25)},
    font=\sf,
    xlabel={$t_k$ (s)},
    ylabel={},
    xmin= 0, xmax=5,
    ymin= 0, ymax=0.006,
    ytick={0,0.001,0.002,0.003,0.004,0.005},
    xtick={0,1,2,3,4,5},
    legend pos=south east,
    no markers,
    grid=none,
    minor tick num =5,
    minor tick style={draw=none},
    minor grid style={thin,color=black!10},
    major grid style={thin,color=black!10},
    tick align=outside,
    axis x line*=none,
    axis y line*=none,
]
\addlegendentry{$\norm{e(k)}$ }
\addplot[color=blue!80,smooth,thick] table [x=t,y=hn,col sep=comma]{data_ees.csv};
\end{axis}
\end{tikzpicture}
\caption{Magnitude of global error for \eqref{ex_dis_el} for a stepsize $h=10^{-2}$ and $t_k\in[0,5]$.} 
\label{fig-error}
\end{center}
\end{figure}
\begin{figure}
\begin{center}
\begin{tikzpicture}
\begin{axis}[
    width=\linewidth,
    height=5cm,
    at ={(0,-0.25)},
    font=\sf,
    xlabel={$t_k$ (s)},
    ylabel={ },
    xmin= 0, xmax=5,
    ymin= -0.4, ymax=0.6,
    ytick={-0.4,-0.2,0.0,0.2,0.4,0.6},
    xtick={0,1,2,3,4,5},
    legend pos=north east,
    no markers,
    grid=none,
    minor tick num =5,
    minor tick style={draw=none},
    minor grid style={thin,color=black!10},
    major grid style={thin,color=black!10},
    tick align=outside,
    axis x line*=none,
    axis y line*=none,
]
\addlegendentry{$x_{1,k}$}
\addplot[color=blue,smooth,thick] table [x=t,y=X1,col sep=comma]{data_sym.csv};
\addlegendentry{$x_{2,k}$}
\addplot[color=red,smooth,thick] table [x=t,y=X2,col sep=comma]{data_sym.csv};
\addlegendentry{$x_1(t_k)$}
\addplot[color=green,smooth,dashed,thick] table [x=t,y=x1,col sep=comma]{data_sym.csv};
\addlegendentry{$x_2(t_k)$}
\addplot[color=black,smooth,dashed,thick] table [x=t,y=x2,col sep=comma]{data_sym.csv};
\end{axis}
\end{tikzpicture}
\caption{System State $x_k$ and for \eqref{ex_dis_sym} plotted against exact discretization (ODE45) $x(t_k)$ for stepsize $h=10^{-2}$ and $t_k\in[0,5]$.} 
\label{fig-state_sym}
\end{center}
\end{figure}
\begin{figure}
\begin{center}
\begin{tikzpicture}
\begin{axis}[
    width=\linewidth,
    height=5cm,
    at ={(0,-0.25)},
    font=\sf,
    xlabel={$t_k$ (s)},
    ylabel={ },
    xmin= 0, xmax=5,
    ymin= -10, ymax=2.0,
    ytick={-8,-6,-4,-2,0,0.0},
    xtick={0,1,2,3,4,5},
    legend pos=south east,
    no markers,
    grid=none,
    minor tick num =5,
    minor tick style={draw=none},
    minor grid style={thin,color=black!10},
    major grid style={thin,color=black!10},
    tick align=outside,
    axis x line*=none,
    axis y line*=none,
]
\addlegendentry{$u_{k}$}
\addplot[color=blue,smooth,thick] table [x=t,y=U,col sep=comma]{data_sym_cont.csv};
\addlegendentry{$u(t_k)$}
\addplot[color=red,smooth,dashed,thick] table [x=t,y=u,col sep=comma]{data_sym_cont.csv};
\end{axis}
\end{tikzpicture}
\caption{Control input $u_k$ for \eqref{ex_dis_sym} plotted against exact discretization (ODE45) $u(t_k)$ for stepsize $h=10^{-2}$ and $t_k\in[0,5]$.} 
\label{fig-control_sym}
\end{center}
\end{figure}
\begin{figure}
\begin{center}
\begin{tikzpicture}
\begin{axis}[
    width=\linewidth,
    height=5cm,
    at ={(0,-0.25)},
    font=\sf,
    xlabel={$t_k$ (s)},
    ylabel={},
    xmin= 0, xmax=5,
    ymin= 0, ymax=0.0006,
    ytick={0,0.0001,0.0002,0.0003,0.0004,0.0005},
    xtick={0,1,2,3,4,5},
    legend pos=south east,
    no markers,
    grid=none,
    minor tick num =5,
    minor tick style={draw=none},
    minor grid style={thin,color=black!10},
    major grid style={thin,color=black!10},
    tick align=outside,
    axis x line*=none,
    axis y line*=none,
]
\addlegendentry{$\norm{e(k)}$ }
\addplot[color=blue!80,smooth,thick] table [x=t,y=hn,col sep=comma]{data_sym.csv};
\end{axis}
\end{tikzpicture}
\caption{Magnitude of global error for \eqref{ex_dis_sym} for a stepsize $h=10^{-2}$ and $t_k\in[0,5]$.} 
\label{fig-error_sym}
\end{center}
\end{figure}
\begin{figure}
\begin{center}
\begin{tikzpicture}
\begin{axis}[
    width=\linewidth,
    height=5cm,
    at ={(0,-0.25)},
    font=\sf,
    xlabel={$t_k$ (s)},
    ylabel={ },
    xmin= 0, xmax=0.05,
    ymin= -10, ymax=-4,
    ytick={-10,-8,-7,-6,-5},
    xtick={0,0.005,0.01,0.015,0.02,0.025,0.03,0.035,0.04,0.045},
    legend pos=south east,
    no markers,
    grid=none,
    minor tick num =5,
    minor tick style={draw=none},
    minor grid style={thin,color=black!10},
    major grid style={thin,color=black!10},
    tick align=outside,
    axis x line*=none,
    axis y line*=none,
]
\addlegendentry{$u_{k}$}
\addplot[color=blue,const plot,thick] table [x=t1,y=U1,col sep=comma]{data_sym_cont.csv};
\addlegendentry{$u(t_k)$}
\addplot[color=red,const plot,dashed,thick] table [x=t1,y=u1,col sep=comma]{data_sym_cont.csv};
\end{axis}
\end{tikzpicture}
\caption{Zoomed-in Control input $u_k$ for \eqref{ex_dis_el}  stepsize $h=10^{-2}$ and $t_k\in[0,0.05]$.} 
\label{fig-zoom_sym}
\end{center}
\end{figure}
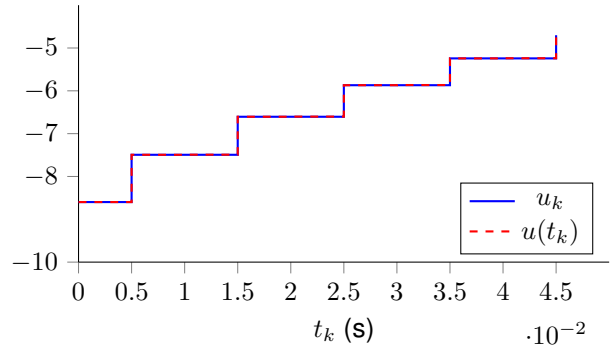
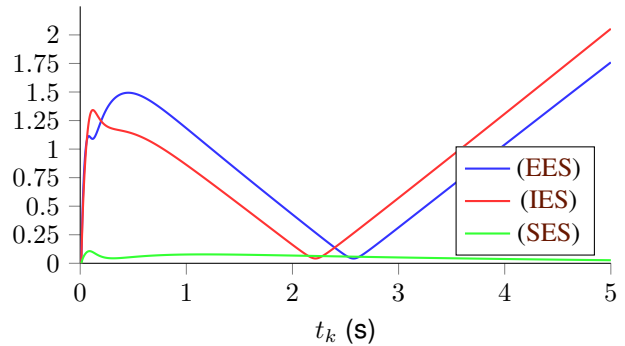
\begin{figure}
\begin{center}
\begin{tikzpicture}
\begin{axis}[
    width=\linewidth,
    height=5cm,
    at ={(0,-0.25)},
    font=\sf,
    xlabel={$t_k$ (s)},
    ylabel={},
    xmin= 0, xmax=5,
    ymin= 0, ymax=2.25,
    ytick={0,0.25,0.5,0.75,1.0,1.25,1.5,1.75,2.0},
    xtick={0,1,2,3,4,5},
    legend pos=south east,
    no markers,
    grid=none,
    minor tick num =5,
    minor tick style={draw=none},
    minor grid style={thin,color=black!10},
    major grid style={thin,color=black!10},
    tick align=outside,
    axis x line*=none,
    axis y line*=none,
]
\addlegendentry{\eqref{ex_dis_el}}
\addplot[color=blue!80,smooth,thick] table [x=t,y=hr,col sep=comma]{data_ees.csv};
\addlegendentry{\eqref{IES1}}
\addplot[color=red!80,smooth,thick] table [x=t,y=hr,col sep=comma]{data_ies.csv};
\addlegendentry{\eqref{ex_dis_sym}}
\addplot[color=green!80,smooth,thick] table [x=t,y=hr,col sep=comma]{data_sym.csv};
\end{axis}
\end{tikzpicture}
\caption{Comparing (percentage) relative error for \eqref{ex_dis_el}, \eqref{IES1} and \eqref{ex_dis_sym} for $h=10^{-2}$} 
\label{fig-error_pct}
\end{center}
\end{figure}
\section{Conclusions}
\label{cnclsn}
In this article, we have utilized the idea of retraction maps and their lifts under diffeomorphism to construct feedback linearizable discretization. Given a continuous-time feedback linearizable system, we show that one can build first-order discretization that preserves feedback linearizability. This is done by lifting a discretization of the linearized continuous time system. We have also shown a way to functionally compose two first-order discretizations to design second-order discretizations that are feedback linearizable. However, this comes at the cost of multi-rate sampling.
\section*{Acknowledgment}
We would like to thank Debasish Chatterjee
(Professor, Systems and Control Engineering, IIT Bombay) for his ideas and  discussions which were crucial in the development of this work.
\bibliographystyle{IEEEtran}
\bibliography{IEEEabrv,ref}

\end{document}